\documentclass[11pt,a4paper]{article}
\usepackage{geometry}
\usepackage[T1]{fontenc}
\usepackage{lmodern}
\usepackage{microtype,amsmath,amssymb,amsthm,mathtools}
\usepackage{pgfplots}
\pgfplotsset{compat=1.18}
\usepackage[numbers]{natbib}
\usepackage[colorlinks=true,linkcolor=blue!45!black,citecolor=blue!45!black,urlcolor=blue!45!black]{hyperref}
\usepackage{bookmark,needspace}

\hypersetup{pdftitle={Dual lattice attacks for bounded distance decoding, revisited},pdfauthor={Thijs Laarhoven}}
\newcommand{\R}{\mathbb R}

\newcommand{\E}{\mathop{\mathbb E}}
\newcommand{\Prob}{\mathop{\mathbb P}}
\newcommand{\Var}{\operatorname*{Var}}
\newcommand{\Unif}{\operatorname{Unif}}
\newcommand{\vol}{\operatorname{vol}}
\newcommand{\norm}[1]{\lVert#1\rVert}
\newcommand{\Rp}{R_{\lambda_1}}
\newcommand{\cL}{\mathcal{L}}
\newcommand{\dd}{\,\mathrm d}
\newtheorem{theorem}{Theorem}
\newtheorem{lemma}[theorem]{Lemma}
\title{Dual lattice attacks for bounded distance decoding, revisited}
\author{Thijs Laarhoven}
\date{}
\begin{document}
\maketitle
\begin{abstract}
Analyses of dual lattice attacks have often assumed that the individual scores associated with short dual vectors are mutually independent. Laarhoven--Walter used this heuristic to derive explicit trade-offs between the target radius and query time for bounded distance decoding (BDD) with preprocessing. Ducas--Pulles subsequently demonstrated theoretical and experimental failures of this heuristic and proposed an alternative model conditioned on the target norm.

In this note, we prove an explicit asymptotic trade-off for (decision-)BDD with preprocessing in the Haar-random lattice model, without heuristic assumptions. Using moment identities of Siegel and Rogers, we analyze cosine scores over complete dual balls and bound both error probabilities when distinguishing targets planted at a prescribed radius from uniform targets modulo the lattice. Optimizing the dual radius yields a trade-off between target radius and query time that matches the asymptotic prediction from the conditional model of Ducas--Pulles.
\end{abstract}

{\small\noindent\textbf{Keywords:} random lattices; bounded distance decoding; dual lattice attacks; average-case analysis.\par}

%%%%%%%%%%%%%%%%%%%%%%%%%%%%%%%%%%%%%%%%%%%%%%%%%%%%%%%%%%%%%%%%%%%%%%%
%%%%%%%%%%%%%%%%%%%%%%%%%%%%%%%%%%%%%%%%%%%%%%%%%%%%%%%%%%%%%%%%%%%%%%%
%%%%%%%%%%%%%%%%%%%%%%%%%%%%%%%%%%%%%%%%%%%%%%%%%%%%%%%%%%%%%%%%%%%%%%%

\section{Introduction}
\label{sec:intro}

In the context of finding closest lattice points, dual lattice attacks provide a way to recognize targets close to a lattice by testing them against short vectors of its dual. This approach is particularly relevant to bounded distance decoding (BDD), where targets are guaranteed to lie close to the lattice, and to the cryptanalysis of various lattice-based cryptographic schemes based on learning with errors (LWE)~\cite{LW,CMST}. Each dual vector defines a periodic cosine score that equals one when the target is a lattice point, and has expectation zero for uniformly random targets. Short dual vectors can retain a positive average score for targets that lie close to a lattice. Averaging many such scores can therefore reveal a small distinguishing bias. The underlying Fourier viewpoint goes back to Aharonov--Regev~\cite{AR}, who approximate a periodic Gaussian using discrete-Gaussian dual samples.

The setting with preprocessing separates the cost of finding dual vectors from the cost of using them. A list of short dual vectors can be (pre)computed once (for instance, through a lattice sieve) and reused for many target queries. Evaluating the score then takes time essentially linear in the list size, making the required number of dual vectors a natural measure of query complexity. Laarhoven--Walter~\cite{LW} studied this geometric perspective and derived explicit asymptotic trade-offs between the target radius and the query time for BDD and related closest-vector problems. Their analysis suggested that the dual approach is particularly effective when the target is unusually close to the lattice.

The single-target analysis uses an independence heuristic for the individual scores. Under this assumption, concentration estimates predict how many dual vectors are needed to separate a planted target from a uniform one. However, the required independence of the individual scores is not guaranteed. Dependencies between these scores can substantially affect predicted error probabilities. The analysis also uses a Gaussian approximation to the individual inner products. For the exponentially small biases relevant to the attack, this approximation need not preserve the leading exponential term.

Ducas--Pulles~\cite[Sections~4--7]{DP} showed, both theoretically and experimentally, that the independence heuristic can give incorrect score distributions and overly optimistic success probabilities. They proposed a model that fixes the target norm and treats dual vectors as independent samples from a ball. The resulting conditional mean is expressed through a Bessel function, while a central-limit heuristic models fluctuations around it. Their experiments support these predictions, but the analysis still relies on assumptions about conditional score distributions and a ball approximation of the Voronoi cell for uniform targets.

Rigorous analyses are available for other sampling models. Pouly--Shen~\cite{PS} prove guarantees for dual attacks using independent discrete-Gaussian samples, and Qu--Xu~\cite{QX} extend this approach to modulus switching. Bashiri--Wiemers~\cite[Section~4]{BW} likewise study success and sample bounds within a discrete-Gaussian model. For ball-based scores, Carrier et al.~\cite[Section~8]{CDMT} analyze Bessel expressions and false-positive floors, while Debris-Alazard et al.~\cite{DDRT} combine Fourier analysis with lattice averaging to study smoothing.

These developments leave a natural question for BDD with preprocessing: which explicit asymptotic radius/query trade-off follows from the conditional score model, and can that trade-off be established without additional heuristics about the scores? Answering this requires controlling both the mean signal and its fluctuations for planted and uniform targets. 

%%%%%%%%%%%%%%%%%%%%%%%%%%%%%%%%%%%%%%%%%%%%%%%%%%%%%%%%%%%%%%%%%%%%%%%

\subsection{Contributions}

We address this question for Haar-random unimodular lattices, using all nonzero dual vectors in a ball. The planted error has a prescribed norm and is independent of the lattice, and the alternative is a target uniform modulo the lattice. In this setting, classical lattice moment identities provide direct control of the dependent dual-vector sums. With these techniques, we prove an explicit radius/query tradeoff for full short-vector catalogues of Haar-random lattices. The contribution is an achievability guarantee at a prescribed target radius; we do not claim that these complexities are optimal.

\Needspace{23\baselineskip}
\begin{theorem}[Fixed-radius tradeoff]\label{thm:main}
Let $\cL\subset\R^n$ be Haar-random unimodular and let $\Rp$ be the radius of the Euclidean unit-volume ball. For fixed $d\in(0,1)$, define
\begin{equation}\label{eq:curve}
 q(d)=\frac{1-d+\ln((1+d)/2)}{\ln2},\qquad
 \alpha(d)=\mathrm e^d\sqrt{\frac{1-d}{1+d}}.
\end{equation}
Using an initial list $\mathcal S_n$ of the $2^{q(d)n+o(n)}$ shortest nonzero dual vectors $\mathbf{w} \in \mathcal{L}^*$ of norm at most $R=2^{q(d)+o(1)}\Rp$, we can separate a uniform target modulo $\cL$ from a planted target $\mathbf{t}=\mathbf{v}+\mathbf{e}$, where $\mathbf{v}\in\cL$ and $\mathbf{e}$ is uniform on the sphere of radius $\alpha(d)\Rp$. This distinguisher works by computing
\begin{equation}\label{eq:mainscore}
 G(\mathbf{t})=\frac{1}{|\mathcal S_n|}\sum_{\mathbf{w}\in\mathcal S_n}\cos(2\pi\langle\mathbf{w},\mathbf{t}\rangle),\qquad
 \tau=|\mathcal S_n|^{-1/2+o(1)}.
\end{equation}
It outputs ``random'' exactly when $G(\mathbf{t})<\tau$. Both the false-positive and false-negative probabilities are at most $2^{-\Omega(n^{2/3})}$, averaged over the lattice and target.
\end{theorem}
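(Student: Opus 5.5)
The plan is a second-moment (Chebyshev) argument for both error types. All lattice averages are computed with the Siegel mean value formula and the Rogers second moment formula, applied to $\cL^*$, which is again Haar-random unimodular.

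To avoid dividing by a random list size, I would first fix the radius $R$ and write $N=|\mathcal S_n|=\#\{\mathbf w\in\cL^*\setminus\{0\}:\norm{\mathbf w}\le R\}$. Siegel gives $\E N=V_R:=\vol(B_R)$. Rogers (for $n\ge3$) gives $\Var N=O(V_R)$, since the contributions from the pairs $\mathbf w'=\pm\mathbf w$ and from rational multiples are all $O(V_R)$. Hence $N=V_R(1+O(V_R^{-1/2}2^{n^{2/3}/4}))$ except with probability $2^{-\Omega(n^{2/3})}$. So I may replace the normalisation $1/N$ by $1/V_R$, which makes $G$ a linear lattice sum plus a negligible perturbation. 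I then choose $R=2^{q(d)+o(1)}\Rp$ and $\tau=V_R^{-1/2}2^{n^{2/3}/2}$.

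\emph{False positives.} Let $\mathbf t$ be uniform modulo $\cL$. For every $\mathbf w\ne0$ we have $\E_{\mathbf t}\cos(2\pi\langle\mathbf w,\mathbf t\rangle)=0$. Moreover $\E_{\mathbf t}[\cos(2\pi\langle\mathbf w,\mathbf t\rangle)\cos(2\pi\langle\mathbf w',\mathbf t\rangle)]=\tfrac12\mathbf 1[\mathbf w'=\pm\mathbf w]$, so conditionally on $\cL$ we get $\E G=0$ and $\E G^2=1/N$. Chebyshev then gives $\Prob(G\ge\tau)\le 1/(N\tau^2)=2^{-\Omega(n^{2/3})}$ on the event that $N\approx V_R$.

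\emph{False negatives.} Let $\mathbf t=\mathbf v+\mathbf e$, so that $G(\mathbf t)=G(\mathbf e)$. Put $f(\mathbf x)=\mathbf 1[0<\norm{\mathbf x}\le R]\cos(2\pi\langle\mathbf x,\mathbf e\rangle)$. Because $\mathbf e$ is independent of $\cL$, Siegel gives
\[
\mu:=\E V_R^{-1}\textstyle\sum_{\mathbf w} f(\mathbf w)=V_R^{-1}\widehat{\mathbf 1_{B_R}}(\mathbf e)=V_R^{-1}R^{n/2}\norm{\mathbf e}^{-n/2}J_{n/2}(2\pi R\norm{\mathbf e}).
\]
This quantity is deterministic, since $\norm{\mathbf e}=\alpha(d)\Rp$ is fixed. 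For the second moment, Rogers' formula splits $\E_{\cL}\sum_{\mathbf w,\mathbf w'}f(\mathbf w)f(\mathbf w')$ into three parts:
\begin{itemize}
\item the main term $(\int f)^2=\widehat{\mathbf 1_{B_R}}(\mathbf e)^2$, which cancels exactly against $(V_R\mu)^2$, because $\widehat{\mathbf 1_{B_R}}(\mathbf e)$ depends only on $\norm{\mathbf e}$;
\item the diagonal terms $\mathbf w'=\pm\mathbf w$, contributing at most $2V_R$;
\item the terms with $\mathbf w'=(k/m)\mathbf w$ for coprime $(k,m)$, which Rogers' weights bound by $O(V_R)$ for $n\ge3$, uniformly in $\mathbf e$.
\end{itemize}
Thus $\Var G=O(1/V_R)$, and Chebyshev gives $\Prob(G<\tau)\le O(1/V_R)/(\mu-\tau)^2$. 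It therefore suffices to show $\mu\ge V_R^{-1/2}2^{cn^{2/3}}$ for a suitable constant $c>1/2$, since then $\mu-\tau\ge\tfrac12\mu$.

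\emph{Main obstacle: the exponent of the Bessel term.} This is the step I expect to be hardest. Set $\nu=n/2$ and use $\Rp\sim\sqrt{n/(2\pi\mathrm e)}$. The argument is $2\pi R\norm{\mathbf e}=\nu z$ with $z=2^{q+1}\alpha/\mathrm e$, so I would apply the Debye uniform asymptotics for $J_\nu(\nu z)$ with $0<z<1$. This should turn the condition $\mu^2V_R\ge1$ into an explicit inequality between $\ln\alpha$ and $q$. Optimising over the dual radius should give the curve \eqref{eq:curve}: I expect the parametrisation $z=\sqrt{1-d^2}$ (or similar) to reproduce $q(d)$ and $\alpha(d)$, with $\mu^2V_R=2^{o(n)}$ exactly on the curve.

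The $2^{o(n)}$ slack in $R$ is what buys the margin. Increasing $R$ by a factor $1+n^{-1/3}$ raises $\mu^2V_R$ by $2^{\Theta(n^{2/3})}$, because the exponent has nonzero derivative in $\ln R$ at interior points. The Debye error terms are only $O(\log n)$ in the exponent, so the margin survives and gives the claimed $2^{-\Omega(n^{2/3})}$ bounds. Finally, I need to check that the oscillatory regime $z\ge1$ is never reached for $d\in(0,1)$, and that the Rogers remainder is uniform over this range of $R$.
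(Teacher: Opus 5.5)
Your architecture matches the paper's: Siegel and Rogers plus Chebyshev for planted targets, the conditional identity $\E_{\mathbf t}G^2=1/|\mathcal S_n|$ for uniform targets, Debye asymptotics for the signal, and a shift of $\ln R$ of order $n^{-1/3}$ for the margin. The exponent you left open is what the paper computes. With $\beta=R/\Rp$ and $\delta=\sqrt{1-z^2}$ one has $\ln(V_R\mu^2)=n\Phi+O(\log n)$, where $\Phi=\ln\beta+\delta-1+\ln(2/(1+\delta))$. Your guess is right: $\beta=2^{q(d)}$ and $\alpha=\alpha(d)$ give $z=\sqrt{1-d^2}$, hence $\delta=d$ and $\Phi=0$. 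Moreover $\partial\Phi/\partial\ln\beta=\delta=d>0$.

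The step that fails is the fit between your constants. The factor $1+n^{-1/3}$ yields $V_R\mu^2=\mathrm e^{dn^{2/3}+O(n^{1/3})}$. Equivalently $\mu=V_R^{-1/2}2^{cn^{2/3}+O(n^{1/3})}$ with $c=d/(2\ln2)$, and $c>1/2$ holds only for $d>\ln2$. Now take $d<\ln2$. Your threshold $\tau=V_R^{-1/2}2^{n^{2/3}/2}$ then exceeds the planted mean $\mu$ by a factor $2^{\Omega(n^{2/3})}$. The planted score, however, has standard deviation only $O(V_R^{-1/2})$, so planted targets are rejected with probability tending to $1$.

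The repair is immediate. Either shift $\ln R$ by $Kn^{-1/3}$ with $Kd>\ln2$, or use $\tau=V_R^{-1/2}2^{\epsilon n^{2/3}}$ with $0<\epsilon<d/(2\ln2)$. The paper avoids this tuning by thresholding at half the signal, $\tau=V_R\mu(R_\alpha)/(2|\mathcal S_n|)$. Then $G(\mathbf t)\ge\tau$ is exactly the event that the linear lattice sum $|\mathcal S_n|G(\mathbf e)$ reaches the deterministic value $V_R\mu(R_\alpha)/2$. Both error bounds, of order $1/(V_R\mu^2)$, then follow from Siegel, Rogers and $\E|\mathcal S_n|=V_R$ alone. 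Concentration of $|\mathcal S_n|$ is needed only afterwards, to check that $\tau=|\mathcal S_n|^{-1/2+o(1)}$.
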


Optimizing the conditional (heuristic) model of Ducas--Pulles~\cite[Corollary~6.4, Heuristic~6.5]{DP} for complete dual balls gives the same exponent. We carry out this asymptotic analysis and prove its achievability; their paper does not state the optimized curve. The resulting curve, illustrating the trade-off between the query time and decoding radius, is shown in Figure~\ref{fig:tradeoff}. For comparison, the heuristic CVP sieve~\cite[Theorem~1]{LaaCVP} has time exponent $\tfrac12\log_2(3/2) \approx 0.2925$. Its horizontal benchmark in Figure~\ref{fig:tradeoff} intersects our curve at $\alpha(d) \approx 0.9423$; below this distance threshold, our distinguishing query exponent is smaller than the time exponent of a direct CVP sieve.

Note that with a known planted radius, the same score can also be used for search-BDD by estimating the error from its gradient and rounding, following the Fourier-decoding approach of~\cite[Sections~3--4]{DRS}. We restrict the formal result here to distinguishing.

\paragraph{The limit of $\alpha \uparrow 1$} As $d\downarrow0$ and $\alpha(d)\uparrow1$, we obtain $q(d)\to\log_2(\mathrm e/2)$. This constant $\mathrm e/2$ already appears in Ducas--Pulles~\cite[Lemma~6.7 and the subsequent discussion]{DP}: their sufficient condition for a positive mean score at target radius $R_\alpha$ is $R R_\alpha<n/(4\pi)\sim(\mathrm e/2)\Rp^2$, obtained from the first Bessel zero. For $R_\alpha\sim\Rp$, the Bessel transition thus occurs at dual radius $R\sim(\mathrm e/2)\Rp$. %More explicitly, by Stirling's formula and $J_x(x)=\Theta(x^{-1/3})$~\cite[Eq.~10.19.8]{DLMF} the normalized signal at the transition satisfies
%\begin{equation}\label{eq:endpointsignal}
% \Gamma(n/2+1)\left(\frac2{n/2}\right)^{n/2} J_{n/2}(n/2)
% =\left(\frac2{\mathrm e}\right)^{n/2+o(n)}.
%\end{equation}
%Its inverse square has the same exponential scale $(\mathrm e/2)^{n+o(n)}$ as the critical dual-ball volume. The endpoint therefore follows from the Bessel signal of prior work; Theorem~\ref{thm:main} proves achievability along the optimized curve for $\alpha<1$.

\begin{figure}[!t]
\centering
\begin{tikzpicture}
\begin{axis}[width=0.90\linewidth,height=6.4cm,
 xmin=0,xmax=1,ymin=0,ymax=0.46,
 xlabel={Fixed target radius $\alpha=R_\alpha/\Rp$},ylabel={Query exponent},
 grid=major,grid style={gray!17},
 legend cell align=left,
 legend style={at={(0.025,0.975)},anchor=north west,font=\scriptsize,draw=none,fill=white},
 tick label style={font=\small},label style={font=\small}]
\addplot[green!50!black,very thick] coordinates {
% BEGIN_HAAR_COORDINATES
(0.000000000000,0.000000000000)
(0.123685037437,0.003002479465)
(0.174371864261,0.005998671046)
(0.212895871052,0.008988548408)
(0.245065829370,0.011972085048)
(0.273139731121,0.014949254295)
(0.298279615241,0.017920029312)
(0.321178162523,0.020884383090)
(0.342288251836,0.023842288450)
(0.361925547449,0.026793718038)
(0.380320687562,0.029738644327)
(0.397648400027,0.032677039614)
(0.414044914022,0.035608876019)
(0.429618953948,0.038534125483)
(0.444458991628,0.041452759764)
(0.458638205868,0.044364750442)
(0.472217978257,0.047270068912)
(0.485250421481,0.050168686381)
(0.497780248910,0.053060573874)
(0.509846184088,0.055945702225)
(0.521482041541,0.058824042077)
(0.532717568145,0.061695563883)
(0.543579106951,0.064560237903)
(0.554090127296,0.067418034201)
(0.564271652776,0.070268922645)
(0.574142610193,0.073112872903)
(0.583720116650,0.075949854445)
(0.593019717723,0.078779836538)
(0.602055586537,0.081602788244)
(0.610840691353,0.084418678421)
(0.619386937544,0.087227475720)
(0.627705288623,0.090029148581)
(0.635805869964,0.092823665234)
(0.643698058185,0.095610993697)
(0.651390558528,0.098391101771)
(0.658891472170,0.101163957042)
(0.666208355023,0.103929526875)
(0.673348269303,0.106687778417)
(0.680317828939,0.109438678590)
(0.687123239686,0.112182194093)
(0.693770334704,0.114918291398)
(0.700264606193,0.117646936745)
(0.706611233624,0.120368096148)
(0.712815108995,0.123081735385)
(0.718880859502,0.125787819998)
(0.724812867919,0.128486315294)
(0.730615290996,0.131177186340)
(0.736292076076,0.133860397959)
(0.741846976164,0.136535914733)
(0.747283563603,0.139203700995)
(0.752605242524,0.141863720833)
(0.757815260196,0.144515938080)
(0.762916717397,0.147160316319)
(0.767912577901,0.149796818875)
(0.772805677188,0.152425408817)
(0.777598730432,0.155046048954)
(0.782294339853,0.157658701829)
(0.786895001498,0.160263329723)
(0.791403111490,0.162859894647)
(0.795820971809,0.165448358343)
(0.800150795640,0.168028682280)
(0.804394712328,0.170600827650)
(0.808554771983,0.173164755369)
(0.812632949749,0.175720426069)
(0.816631149784,0.178267800103)
(0.820551208963,0.180806837533)
(0.824394900332,0.183337498135)
(0.828163936335,0.185859741392)
(0.831859971825,0.188373526493)
(0.835484606887,0.190878812329)
(0.839039389478,0.193375557490)
(0.842525817899,0.195863720263)
(0.845945343129,0.198343258630)
(0.849299370994,0.200814130261)
(0.852589264230,0.203276292517)
(0.855816344408,0.205729702441)
(0.858981893754,0.208174316758)
(0.862087156863,0.210610091871)
(0.865133342315,0.213036983859)
(0.868121624204,0.215454948472)
(0.871053143578,0.217863941129)
(0.873929009802,0.220263916914)
(0.876750301852,0.222654830572)
(0.879518069536,0.225036636509)
(0.882233334650,0.227409288782)
(0.884897092079,0.229772741104)
(0.887510310837,0.232126946833)
(0.890073935058,0.234471858972)
(0.892588884936,0.236807430166)
(0.895056057617,0.239133612695)
(0.897476328052,0.241450358474)
(0.899850549806,0.243757619048)
(0.902179555827,0.246055345586)
(0.904464159182,0.248343488880)
(0.906705153757,0.250621999341)
(0.908903314928,0.252890826992)
(0.911059400195,0.255149921468)
(0.913174149794,0.257399232009)
(0.915248287276,0.259638707457)
(0.917282520071,0.261868296252)
(0.919277540010,0.264087946426)
(0.921234023844,0.266297605603)
(0.923152633726,0.268497220990)
(0.925034017680,0.270686739375)
(0.926878810050,0.272866107121)
(0.928687631929,0.275035270164)
(0.930461091567,0.277194174005)
(0.932199784773,0.279342763710)
(0.933904295289,0.281480983900)
(0.935575195156,0.283608778749)
(0.937213045064,0.285726091982)
(0.938818394691,0.287832866863)
(0.940391783021,0.289929046197)
(0.941933738661,0.292014572320)
(0.943444780137,0.294089387098)
(0.944925416184,0.296153431918)
(0.946376146023,0.298206647687)
(0.947797459629,0.300248974821)
(0.949189837989,0.302280353245)
(0.950553753350,0.304300722386)
(0.951889669457,0.306310021166)
(0.953198041791,0.308308187996)
(0.954479317783,0.310295160774)
(0.955733937038,0.312270876876)
(0.956962331536,0.314235273151)
(0.958164925839,0.316188285913)
(0.959342137283,0.318129850941)
(0.960494376164,0.320059903466)
(0.961622045922,0.321978378170)
(0.962725543317,0.323885209174)
(0.963805258599,0.325780330039)
(0.964861575670,0.327663673753)
(0.965894872246,0.329535172729)
(0.966905520012,0.331394758796)
(0.967893884769,0.333242363191)
(0.968860326581,0.335077916557)
(0.969805199912,0.336901348931)
(0.970728853770,0.338712589738)
(0.971631631829,0.340511567789)
(0.972513872565,0.342298211264)
(0.973375909377,0.344072447714)
(0.974218070709,0.345834204050)
(0.975040680166,0.347583406533)
(0.975844056628,0.349319980770)
(0.976628514364,0.351043851704)
(0.977394363133,0.352754943607)
(0.978141908295,0.354453180073)
(0.978871450909,0.356138484007)
(0.979583287835,0.357810777620)
(0.980277711824,0.359469982417)
(0.980955011621,0.361116019193)
(0.981615472046,0.362748808019)
(0.982259374090,0.364368268239)
(0.982886994999,0.365974318456)
(0.983498608358,0.367566876526)
(0.984094484173,0.369145859548)
(0.984674888951,0.370711183854)
(0.985240085778,0.372262765000)
(0.985790334396,0.373800517758)
(0.986325891274,0.375324356103)
(0.986847009687,0.376834193205)
(0.987353939777,0.378329941419)
(0.987846928630,0.379811512276)
(0.988326220340,0.381278816468)
(0.988792056076,0.382731763843)
(0.989244674144,0.384170263390)
(0.989684310051,0.385594223231)
(0.990111196568,0.387003550609)
(0.990525563788,0.388398151876)
(0.990927639185,0.389777932481)
(0.991317647672,0.391142796962)
(0.991695811656,0.392492648929)
(0.992062351094,0.393827391057)
(0.992417483549,0.395146925069)
(0.992761424237,0.396451151729)
(0.993094386085,0.397739970823)
(0.993416579777,0.399013281154)
(0.993728213806,0.400270980520)
(0.994029494522,0.401512965709)
(0.994320626181,0.402739132481)
(0.994601810988,0.403949375554)
(0.994873249148,0.405143588593)
(0.995135138908,0.406321664194)
(0.995387676603,0.407483493870)
(0.995631056698,0.408628968035)
(0.995865471830,0.409757975993)
(0.996091112854,0.410870405917)
(0.996308168880,0.411966144840)
(0.996516827317,0.413045078635)
(0.996717273910,0.414107092000)
(0.996909692783,0.415152068444)
(0.997094266473,0.416179890266)
(0.997271175973,0.417190438545)
(0.997440600767,0.418183593115)
(0.997602718867,0.419159232555)
(0.997757706851,0.420117234166)
(0.997905739897,0.421057473955)
(0.998046991821,0.421979826620)
(0.998181635110,0.422884165524)
(0.998309840959,0.423770362683)
(0.998431779301,0.424638288744)
(0.998547618847,0.425487812965)
(0.998657527113,0.426318803197)
(0.998761670457,0.427131125859)
(0.998860214113,0.427924645925)
(0.998953322219,0.428699226896)
(0.999041157851,0.429454730782)
(0.999123883055,0.430191018078)
(0.999201658880,0.430907947746)
(0.999274645406,0.431605377186)
(0.999343001777,0.432283162220)
(0.999406886232,0.432941157063)
(0.999466456132,0.433579214301)
(0.999521867996,0.434197184868)
(0.999573277525,0.434794918019)
(0.999620839637,0.435372261305)
(0.999664708492,0.435929060550)
(0.999705037525,0.436465159821)
(0.999741979474,0.436980401403)
(0.999775686409,0.437474625773)
(0.999806309762,0.437947671568)
(0.999834000355,0.438399375563)
(0.999858908430,0.438829572637)
(0.999881183675,0.439238095745)
(0.999900975258,0.439624775888)
(0.999918431850,0.439989442084)
(0.999933701658,0.440331921334)
(0.999946932452,0.440652038593)
(0.999958271592,0.440949616736)
(0.999967866061,0.441224476524)
(0.999975862488,0.441476436572)
(0.999982407183,0.441705313315)
(0.999987646160,0.441910920971)
(0.999991725170,0.442093071506)
(0.999994789726,0.442251574597)
(0.999996985138,0.442386237598)
(0.999998456534,0.442496865495)
(0.999999348898,0.442583260875)
(0.999999807091,0.442645223881)
(0.999999975887,0.442682552173)
(1.000000000000,0.442695040889)
% END_HAAR_COORDINATES
};
\addlegendentry{\textbf{Theorem~\ref{thm:main}} / Ducas--Pulles~\cite{DP} Bessel prediction}
\addplot[red,solid,thick] coordinates {
% BEGIN_LW_COORDINATES
(0.000000000000,0.000000000000)
(0.004166666667,0.000003389730)
(0.008333333333,0.000013559112)
(0.012500000000,0.000030508719)
(0.016666666667,0.000054239507)
(0.020833333333,0.000084752814)
(0.025000000000,0.000122050363)
(0.029166666667,0.000166134257)
(0.033333333333,0.000217006987)
(0.037500000000,0.000274671424)
(0.041666666667,0.000339130827)
(0.045833333333,0.000410388839)
(0.050000000000,0.000488449489)
(0.054166666667,0.000573317195)
(0.058333333333,0.000664996763)
(0.062500000000,0.000763493385)
(0.066666666667,0.000868812647)
(0.070833333333,0.000980960526)
(0.075000000000,0.001099943389)
(0.079166666667,0.001225768000)
(0.083333333333,0.001358441518)
(0.087500000000,0.001497971497)
(0.091666666667,0.001644365893)
(0.095833333333,0.001797633061)
(0.100000000000,0.001957781756)
(0.104166666667,0.002124821142)
(0.108333333333,0.002298760784)
(0.112500000000,0.002479610658)
(0.116666666667,0.002667381151)
(0.120833333333,0.002862083060)
(0.125000000000,0.003063727600)
(0.129166666667,0.003272326401)
(0.133333333333,0.003487891514)
(0.137500000000,0.003710435412)
(0.141666666667,0.003939970995)
(0.145833333333,0.004176511590)
(0.150000000000,0.004420070955)
(0.154166666667,0.004670663284)
(0.158333333333,0.004928303205)
(0.162500000000,0.005193005790)
(0.166666666667,0.005464786554)
(0.170833333333,0.005743661459)
(0.175000000000,0.006029646919)
(0.179166666667,0.006322759803)
(0.183333333333,0.006623017438)
(0.187500000000,0.006930437614)
(0.191666666667,0.007245038589)
(0.195833333333,0.007566839089)
(0.200000000000,0.007895858319)
(0.204166666667,0.008232115961)
(0.208333333333,0.008575632183)
(0.212500000000,0.008926427641)
(0.216666666667,0.009284523486)
(0.220833333333,0.009649941366)
(0.225000000000,0.010022703436)
(0.229166666667,0.010402832358)
(0.233333333333,0.010790351309)
(0.237500000000,0.011185283988)
(0.241666666667,0.011587654617)
(0.245833333333,0.011997487953)
(0.250000000000,0.012414809288)
(0.254166666667,0.012839644461)
(0.258333333333,0.013272019860)
(0.262500000000,0.013711962429)
(0.266666666667,0.014159499678)
(0.270833333333,0.014614659685)
(0.275000000000,0.015077471108)
(0.279166666667,0.015547963188)
(0.283333333333,0.016026165759)
(0.287500000000,0.016512109256)
(0.291666666667,0.017005824719)
(0.295833333333,0.017507343806)
(0.300000000000,0.018016698798)
(0.304166666667,0.018533922610)
(0.308333333333,0.019059048795)
(0.312500000000,0.019592111559)
(0.316666666667,0.020133145765)
(0.320833333333,0.020682186945)
(0.325000000000,0.021239271309)
(0.329166666667,0.021804435756)
(0.333333333333,0.022377717881)
(0.337500000000,0.022959155989)
(0.341666666667,0.023548789102)
(0.345833333333,0.024146656972)
(0.350000000000,0.024752800095)
(0.354166666667,0.025367259715)
(0.358333333333,0.025990077843)
(0.362500000000,0.026621297264)
(0.366666666667,0.027260961555)
(0.370833333333,0.027909115090)
(0.375000000000,0.028565803061)
(0.379166666667,0.029231071485)
(0.383333333333,0.029904967221)
(0.387500000000,0.030587537985)
(0.391666666667,0.031278832360)
(0.395833333333,0.031978899816)
(0.400000000000,0.032687790722)
(0.404166666667,0.033405556362)
(0.408333333333,0.034132248953)
(0.412500000000,0.034867921658)
(0.416666666667,0.035612628606)
(0.420833333333,0.036366424908)
(0.425000000000,0.037129366676)
(0.429166666667,0.037901511038)
(0.433333333333,0.038682916161)
(0.437500000000,0.039473641268)
(0.441666666667,0.040273746659)
(0.445833333333,0.041083293728)
(0.450000000000,0.041902344990)
(0.454166666667,0.042730964096)
(0.458333333333,0.043569215860)
(0.462500000000,0.044417166281)
(0.466666666667,0.045274882561)
(0.470833333333,0.046142433138)
(0.475000000000,0.047019887704)
(0.479166666667,0.047907317232)
(0.483333333333,0.048804794004)
(0.487500000000,0.049712391634)
(0.491666666667,0.050630185100)
(0.495833333333,0.051558250771)
(0.500000000000,0.052496666435)
(0.504166666667,0.053445511328)
(0.508333333333,0.054404866171)
(0.512500000000,0.055374813195)
(0.516666666667,0.056355436177)
(0.520833333333,0.057346820476)
(0.525000000000,0.058349053062)
(0.529166666667,0.059362222559)
(0.533333333333,0.060386419277)
(0.537500000000,0.061421735251)
(0.541666666667,0.062468264283)
(0.545833333333,0.063526101980)
(0.550000000000,0.064595345795)
(0.554166666667,0.065676095075)
(0.558333333333,0.066768451099)
(0.562500000000,0.067872517129)
(0.566666666667,0.068988398454)
(0.570833333333,0.070116202442)
(0.575000000000,0.071256038589)
(0.579166666667,0.072408018570)
(0.583333333333,0.073572256293)
(0.587500000000,0.074748867958)
(0.591666666667,0.075937972111)
(0.595833333333,0.077139689703)
(0.600000000000,0.078354144154)
(0.604166666667,0.079581461415)
(0.608333333333,0.080821770034)
(0.612500000000,0.082075201223)
(0.616666666667,0.083341888929)
(0.620833333333,0.084621969908)
(0.625000000000,0.085915583796)
(0.629166666667,0.087222873193)
(0.633333333333,0.088543983738)
(0.637500000000,0.089879064194)
(0.641666666667,0.091228266540)
(0.645833333333,0.092591746050)
(0.650000000000,0.093969661398)
(0.654166666667,0.095362174745)
(0.658333333333,0.096769451844)
(0.662500000000,0.098191662143)
(0.666666666667,0.099628978890)
(0.670833333333,0.101081579252)
(0.675000000000,0.102549644421)
(0.679166666667,0.104033359744)
(0.683333333333,0.105532914843)
(0.687500000000,0.107048503746)
(0.691666666667,0.108580325027)
(0.695833333333,0.110128581940)
(0.700000000000,0.111693482570)
(0.704166666667,0.113275239987)
(0.708333333333,0.114874072401)
(0.712500000000,0.116490203331)
(0.716666666667,0.118123861775)
(0.720833333333,0.119775282392)
(0.725000000000,0.121444705691)
(0.729166666667,0.123132378220)
(0.733333333333,0.124838552778)
(0.737500000000,0.126563488626)
(0.741666666667,0.128307451705)
(0.745833333333,0.130070714875)
(0.750000000000,0.131853558155)
(0.754166666667,0.133656268979)
(0.758333333333,0.135479142458)
(0.762500000000,0.137322481664)
(0.766666666667,0.139186597917)
(0.770833333333,0.141071811094)
(0.775000000000,0.142978449946)
(0.779166666667,0.144906852437)
(0.783333333333,0.146857366092)
(0.787500000000,0.148830348371)
(0.791666666667,0.150826167054)
(0.795833333333,0.152845200650)
(0.800000000000,0.154887838827)
(0.804166666667,0.156954482861)
(0.808333333333,0.159045546114)
(0.812500000000,0.161161454528)
(0.816666666667,0.163302647160)
(0.820833333333,0.165469576730)
(0.825000000000,0.167662710212)
(0.829166666667,0.169882529449)
(0.833333333333,0.172129531811)
(0.837500000000,0.174404230882)
(0.841666666667,0.176707157191)
(0.845833333333,0.179038858988)
(0.850000000000,0.181399903062)
(0.854166666667,0.183790875607)
(0.858333333333,0.186212383148)
(0.862500000000,0.188665053519)
(0.866666666667,0.191149536896)
(0.870833333333,0.193666506910)
(0.875000000000,0.196216661819)
(0.879166666667,0.198800725765)
(0.883333333333,0.201419450105)
(0.887500000000,0.204073614839)
(0.891666666667,0.206764030133)
(0.895833333333,0.209491537949)
(0.900000000000,0.212257013784)
(0.904166666667,0.215061368538)
(0.908333333333,0.217905550516)
(0.912500000000,0.220790547579)
(0.916666666667,0.223717389452)
(0.920833333333,0.226687150209)
(0.925000000000,0.229700950956)
(0.929166666667,0.232759962716)
(0.933333333333,0.235865409552)
(0.937500000000,0.239018571939)
(0.941666666667,0.242220790424)
(0.945833333333,0.245473469586)
(0.950000000000,0.248778082344)
(0.954166666667,0.252136174644)
(0.958333333333,0.255549370556)
(0.962500000000,0.259019377851)
(0.966666666667,0.262547994088)
(0.970833333333,0.266137113281)
(0.975000000000,0.269788733219)
(0.979166666667,0.273504963513)
(0.983333333333,0.277288034451)
(0.987500000000,0.281140306789)
(0.991666666667,0.285064282567)
(0.995833333333,0.289062617114)
(1.000000000000,0.293138132389)
% END_LW_COORDINATES
};
\addlegendentry{Laarhoven--Walter~\cite{LW} Gaussian prediction}
\addplot[black,line width=1.5pt,dash pattern=on 6pt off 3pt,domain=0:1] {0.292481250360578};
\addlegendentry{Laarhoven~\cite{LaaCVP} CVP sieve}
\addplot[only marks,mark=*,mark size=2pt,black,forget plot] coordinates {(0.942275602017,0.292481250361)};
\node[font=\scriptsize,anchor=south east,text=black] at (axis cs:0.94,0.3) {$\alpha=0.942276$};
\end{axis}
\end{tikzpicture}
\caption{Query exponents for a fixed target radius. The solid green curve is our Haar-lattice guarantee and the asymptotic prediction extracted here from the conditional model of Ducas--Pulles~\cite{DP}. The solid red curve is the Laarhoven--Walter Gaussian prediction~\cite[Lemmas~9--10, Eq.~(60)]{LW}. The endpoint at $\alpha=1$ is a limit. The thick dashed black line gives the heuristic time exponent for solving one CVP instance~\cite[Theorem~1]{LaaCVP}.}
\label{fig:tradeoff}
\end{figure}

%%%%%%%%%%%%%%%%%%%%%%%%%%%%%%%%%%%%%%%%%%%%%%%%%%%%%%%%%%%%%%%%%%%%%%%

\subsection*{AI assistance}

ChatGPT was used for mathematical exploration, literature searches, drafting, numerical checks and typesetting. The author takes full responsibility for the final content.

%%%%%%%%%%%%%%%%%%%%%%%%%%%%%%%%%%%%%%%%%%%%%%%%%%%%%%%%%%%%%%%%%%%%%%%
%%%%%%%%%%%%%%%%%%%%%%%%%%%%%%%%%%%%%%%%%%%%%%%%%%%%%%%%%%%%%%%%%%%%%%%
%%%%%%%%%%%%%%%%%%%%%%%%%%%%%%%%%%%%%%%%%%%%%%%%%%%%%%%%%%%%%%%%%%%%%%%

\section{Preliminaries}
\label{sec:prelim}

Write $B_R$ for the Euclidean ball of radius $R$ centered at the origin and $V_R=\vol(B_R)=(R/\Rp)^n$. The dual list is $\mathcal S_n=(\cL^*\cap B_R)\setminus\{\mathbf{0}\}$. We assume $n\ge3$ and denote the Riemann zeta function by $\zeta(s)=\sum_{m=1}^{\infty}m^{-s}$ for $s>1$. Write $\mathrm{Haar}$ for the Haar probability law on unimodular lattices in $\R^n$, and $\Unif(A)$ for the uniform law on $A$. The dual of a Haar-random lattice has the same distribution.

%%%%%%%%%%%%%%%%%%%%%%%%%%%%%%%%%%%%%%%%%%%%%%%%%%%%%%%%%%%%%%%%%%%%%%%
\paragraph{Lattice moment identities}
For a compactly supported bounded measurable function $f:\R^n\to\R$, write $\widehat f(\cL)=\sum_{\mathbf{w}\in \cL^*\setminus\{\mathbf{0}\}}f(\mathbf{w})$. Siegel's identity~\cite{Siegel} gives $\E_{\cL\sim\mathrm{Haar}}\widehat f(\cL)=\int_{\R^n} f(\mathbf{x})\dd\mathbf{x}$. Rogers' second-moment identity~\cite{Rogers}, in the form stated in~\cite[Corollary~1]{PS26}, gives
\begin{equation}\label{eq:rogers}
 \Var_{\cL\sim\mathrm{Haar}}\widehat f(\cL)=\frac1{\zeta(n)}\sum_{p=1}^{\infty}\sum_{q=1}^{\infty}\int_{\R^n}
 \bigl(f(p\mathbf{x})f(q\mathbf{x})+f(p\mathbf{x})f(-q\mathbf{x})\bigr)\dd\mathbf{x}.
\end{equation}
\begin{lemma}[A uniform Rogers bound]\label{lem:rogers}
Let $f:\R^n\to\R$ be measurable, supported on $B_R$ and satisfy $|f(\mathbf{x})|\le A$. Then
\begin{equation}\label{eq:rogersbound}
 \Var_{\cL\sim\mathrm{Haar}}\widehat f(\cL)\le C_nA^2V_R,\qquad
 C_n=4\frac{\zeta(n-1)}{\zeta(n)}-2=2+O(2^{-n}).
\end{equation}
\end{lemma}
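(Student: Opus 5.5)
The plan is to start from Rogers' identity \eqref{eq:rogers} and bound each summand with the pointwise estimates $|f(p\mathbf{x})f(\pm q\mathbf{x})|\le A^2\,\mathbf 1[\norm{p\mathbf{x}}\le R]\,\mathbf 1[\norm{q\mathbf{x}}\le R]$. For fixed $p,q\ge1$, both indicators are nonzero only when $\norm{\mathbf{x}}\le R/\max(p,q)$. Integrating therefore gives
\[
\int_{\R^n}\bigl|f(p\mathbf{x})f(\pm q\mathbf{x})\bigr|\dd\mathbf{x}\le A^2 V_R\max(p,q)^{-n}.
\]
Each of the two terms in \eqref{eq:rogers} is bounded this way, which yields
\[
\Var\widehat f(\cL)\le\frac{2A^2V_R}{\zeta(n)}\sum_{p,q\ge1}\max(p,q)^{-n}.
\]

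The remaining step is to evaluate the double sum. The number of pairs $(p,q)$ with $\max(p,q)=m$ is $2m-1$, so
\[
\sum_{p,q\ge1}\max(p,q)^{-n}=\sum_{m\ge1}(2m-1)m^{-n}=2\zeta(n-1)-\zeta(n).
\]
This sum converges because $n\ge3$. Substituting it in gives
\[
\Var\widehat f(\cL)\le\frac{2A^2V_R}{\zeta(n)}\bigl(2\zeta(n-1)-\zeta(n)\bigr)=\Bigl(4\frac{\zeta(n-1)}{\zeta(n)}-2\Bigr)A^2V_R,
\]
which is exactly $C_nA^2V_R$.

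For the asymptotic form, note that $\zeta(n-1)=1+2^{-(n-1)}+O(3^{-n})$ and $\zeta(n)=1+O(2^{-n})$. Hence $4\zeta(n-1)/\zeta(n)-2=2+O(2^{-n})$.

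The only point needing care is that Rogers' identity is stated for compactly supported bounded $f$, and this lemma assumes exactly that. No step looks like a real obstacle. The one thing to watch is that the bound must use the absolute value of the integrand, because $f$ may change sign. The crude bound is enough here, so no cancellation between the $f(p\mathbf{x})f(q\mathbf{x})$ and $f(p\mathbf{x})f(-q\mathbf{x})$ terms needs to be exploited.
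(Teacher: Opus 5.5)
Your proof is correct and matches the paper's argument: you bound each of the two integrals for a pair $(p,q)$ by $A^2V_R\max(p,q)^{-n}$, count the $2m-1$ pairs with $\max(p,q)=m$, and sum to obtain $C_nA^2V_R$. You also spell out two steps the paper leaves implicit, namely the evaluation $\sum_{m\ge1}(2m-1)m^{-n}=2\zeta(n-1)-\zeta(n)$ and the asymptotic $C_n=2+O(2^{-n})$.
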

\begin{proof}
Each of the two integrals for a pair $(p,q)$ in~\eqref{eq:rogers} is bounded in absolute value by $A^2V_R/\max(p,q)^n$. There are $2m-1$ positive integer pairs with $\max(p,q)=m$, so the sum of absolute values is at most
\begin{equation}\label{eq:rogersseries}
 \frac{2A^2V_R}{\zeta(n)}\sum_{m=1}^{\infty}\frac{2m-1}{m^n}=C_nA^2V_R.
\end{equation}
This proves the variance bound and the absolute convergence of the double series in~\eqref{eq:rogers} for $n\ge3$.
\end{proof}

We will use Lemma~\ref{lem:rogers} with $A=1$ in two ways. For $f=\mathbf1_{B_R}$, the sum $\widehat f(\cL)$ is the list size $|\mathcal S_n|$, with mean $V_R$ and variance at most $C_nV_R$. Chebyshev's inequality therefore gives, for every $\varepsilon>0$,
\begin{equation}\label{eq:count}
 \Prob_{\cL\sim\mathrm{Haar}}\bigl[\bigl||\mathcal S_n|-V_R\bigr|\ge\varepsilon V_R\bigr]\le\frac{C_n}{\varepsilon^2V_R}.
\end{equation}
For $f(\mathbf{w})=\mathbf1_{B_R}(\mathbf{w})\cos(2\pi\langle\mathbf{w},\mathbf{t}\rangle)$, the sum is the numerator of $G(\mathbf{t})$. Its variance is also at most $C_nV_R$ for each fixed $\mathbf{t}$ independent of $\cL$. This second application will bound the probability that a planted target has a score below the threshold.

%%%%%%%%%%%%%%%%%%%%%%%%%%%%%%%%%%%%%%%%%%%%%%%%%%%%%%%%%%%%%%%%%%%%%%%
\paragraph{Ball Fourier transform and Bessel formulas}
Let $J_\nu$ denote the Bessel function of the first kind of order $\nu$. Slicing the ball and using its integral representation gives
\begin{equation}\label{eq:mu}
 \mu(\norm{\mathbf{t}}):=\frac1{V_R}\int_{B_R}e^{2\pi i\langle\mathbf{w},\mathbf{t}\rangle}\dd\mathbf{w}
 =\frac{\Gamma(n/2+1)J_{n/2}(2\pi R\norm{\mathbf{t}})}{(\pi R\norm{\mathbf{t}})^{n/2}},\qquad \mu(0)=1.
\end{equation}
This classical ball transform~\cite[Fact~4.9]{DDRT} is the conditional signal in~\cite[Corollary~6.4]{DP}; it does not average over target radii. If $j_{n/2,j}$ are the positive zeros of $J_{n/2}$, the Bessel product~\cite[Eq.~10.21.15]{DLMF} gives
\begin{equation}\label{eq:besselproduct}
 \mu(\norm{\mathbf{t}})=\prod_{j=1}^{\infty}\left(1-\frac{\norm{\mathbf{t}}^2}{a_j^2}\right),\qquad
 a_j=\frac{j_{n/2,j}}{2\pi R}.
\end{equation}
In particular, $\mu$ is positive and decreasing before its first zero, and $j_{n/2,1}/(n/2)\to1$. For every fixed $z\in(0,1)$, Debye's expansion~\cite[Eq.~10.19.3]{DLMF} gives
\begin{equation}\label{eq:besselasymp}
 \ln\bigl(J_{n/2}(nz/2)\bigr)=\frac n2\left(\sqrt{1-z^2}+\ln\left(\frac{z}{1+\sqrt{1-z^2}}\right)\right)
 +O(\log n).
\end{equation}
The estimate also holds when $z$ varies with $n$ while staying a fixed positive distance from both $0$ and $1$. Stirling's formula gives $\ln\bigl(\Gamma(n/2+1)\bigr)=\frac n2\ln(n/2)-\frac n2+O(\log n)$ and $\Rp^2=\frac{n}{2\pi\mathrm e}(1+O(\log n/n))$. Substituting into~\eqref{eq:mu}, with $z=4\pi R\norm{\mathbf{t}}/n\in(0,1)$ and $\delta=\sqrt{1-z^2}$, yields the signal asymptotic
\begin{equation}\label{eq:muasymp}
 \mu(\norm{\mathbf{t}})=\exp\left(\frac n2\left[\delta-1+\ln\left(\frac2{1+\delta}\right)\right]+O(\log n)\right).
\end{equation}
Thus $\mu$ is exponentially small for each fixed $z\in(0,1)$, with the exponent shown above.

For a uniform target modulo $\cL$ and $\mathbf{w}\in\cL^*$, we will use the elementary identity
\begin{equation}\label{eq:uniformtarget}
 \E_{\mathbf{t}\sim\Unif(\R^n/\cL)}e^{2\pi i\langle\mathbf{w},\mathbf{t}\rangle}=\mathbf1_{\{\mathbf{w}=\mathbf{0}\}}.
\end{equation}
Uniform measure on the Voronoi cell represents this torus measure.

%%%%%%%%%%%%%%%%%%%%%%%%%%%%%%%%%%%%%%%%%%%%%%%%%%%%%%%%%%%%%%%%%%%%%%%
%%%%%%%%%%%%%%%%%%%%%%%%%%%%%%%%%%%%%%%%%%%%%%%%%%%%%%%%%%%%%%%%%%%%%%%
%%%%%%%%%%%%%%%%%%%%%%%%%%%%%%%%%%%%%%%%%%%%%%%%%%%%%%%%%%%%%%%%%%%%%%%

\section{False-positive and false-negative bounds}
\label{sec:finite}

Fix $R>0$ and a target radius $R_\alpha>0$ satisfying $2\pi RR_\alpha<j_{n/2,1}$. Then $\mu$ is positive and decreasing on $[0,R_\alpha]$. We use the list $\mathcal S_n$ and score $G$ from Theorem~\ref{thm:main}. For a nonempty list, symmetry under $\mathbf{w}\mapsto-\mathbf{w}$ gives the equivalent score formula, with the threshold chosen as follows:
\begin{equation}\label{eq:score}
 G(\mathbf{t})=\frac1{|\mathcal S_n|}\sum_{\mathbf{w}\in\mathcal S_n}e^{2\pi i\langle\mathbf{w},\mathbf{t}\rangle},\qquad
 \tau=\frac{V_R\mu(R_\alpha)}{2|\mathcal S_n|}.
\end{equation}
The test outputs ``planted'' if $G(\mathbf{t})\ge\tau$ and ``random'' otherwise. If the list is empty, set $G=0$ and output ``random''. Periodicity gives $G(\mathbf{v}+\mathbf{e})=G(\mathbf{e})$ for every $\mathbf{v}\in\cL$.

\begin{lemma}[False positives for uniform targets]\label{lem:fp}
For a target uniform modulo $\cL$, the probability $P_{\rm FP}$ that the test above outputs ``planted'', averaged over the lattice and target, satisfies
\begin{equation}\label{eq:fp}
 P_{\rm FP}\le\frac4{V_R\mu(R_\alpha)^2}.
\end{equation}
\end{lemma}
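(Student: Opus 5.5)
Fix the lattice $\cL$ and work first conditionally on it; average over $\cL\sim\mathrm{Haar}$ only at the end. If $\mathcal S_n$ is empty, the test outputs ``random'', so there is no false positive. Otherwise write $N=|\mathcal S_n|$ and $S(\mathbf t)=N\,G(\mathbf t)=\sum_{\mathbf w\in\mathcal S_n}e^{2\pi i\langle\mathbf w,\mathbf t\rangle}$. By the symmetry $\mathbf w\mapsto-\mathbf w$ this sum is real. The event ``planted'' is $G(\mathbf t)\ge\tau$, and with the threshold~\eqref{eq:score} it is exactly
\[
S(\mathbf t)\ge \tfrac12 V_R\mu(R_\alpha).
\]
Note that the random normalization $|\mathcal S_n|$ cancels. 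This is why $\tau$ was chosen proportional to $1/|\mathcal S_n|$, and it means we never need the count estimate~\eqref{eq:count} here.

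Next, compute the first two moments of $S(\mathbf t)$ for $\mathbf t\sim\Unif(\R^n/\cL)$ with $\cL$ fixed. By~\eqref{eq:uniformtarget}, each term has mean zero because $\mathbf 0\notin\mathcal S_n$, so $\E_{\mathbf t}S=0$. For the second moment,
\[
\E_{\mathbf t}|S|^2=\sum_{\mathbf w,\mathbf w'\in\mathcal S_n}\E_{\mathbf t}e^{2\pi i\langle\mathbf w-\mathbf w',\mathbf t\rangle}=N,
\]
again by~\eqref{eq:uniformtarget}, since $\mathbf w-\mathbf w'\in\cL^*$ vanishes only on the diagonal. This exact Parseval identity is the key point: it disposes of all dependencies between the dual scores, because orthogonality holds deterministically for every lattice.

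Then apply Chebyshev's inequality (or simply Markov's inequality on $S^2$) conditionally on $\cL$:
\[
\Prob_{\mathbf t}\bigl[S\ge \tfrac12V_R\mu(R_\alpha)\bigr]\le \frac{4N}{V_R^2\mu(R_\alpha)^2}.
\]
Here we use that $\mu(R_\alpha)>0$, which holds by the standing assumption $2\pi RR_\alpha<j_{n/2,1}$. Averaging over $\cL$ and using Siegel's identity with $f=\mathbf 1_{B_R}$ gives $\E_\cL N=V_R$. This yields $P_{\rm FP}\le 4/(V_R\mu(R_\alpha)^2)$.

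The only delicate step is the interchange of the conditional bound with the lattice average. Because the Chebyshev bound is linear in $N$, we can average it directly with no concentration argument, and the case of an empty list is trivially compatible with this. A secondary point is that ``uniform modulo $\cL$'' must be realized as the uniform measure on the Voronoi cell (or torus), as the paper notes, so that~\eqref{eq:uniformtarget} applies verbatim.
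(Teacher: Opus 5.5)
Your proposal is correct and follows the paper's proof essentially step for step. You work conditionally on $\cL$, use the orthogonality relation \eqref{eq:uniformtarget} to get mean zero and second moment $|\mathcal S_n|$ for $S=|\mathcal S_n|G$, apply Markov/Chebyshev to obtain a conditional bound linear in $|\mathcal S_n|$, and average using $\E|\mathcal S_n|=V_R$; your pairing of $\mathbf w$ with $\mathbf w'$ in $|S|^2$ is the same computation as the paper's pairing of $\mathbf w$ with $-\mathbf w$ in $G^2$. Your explicit note that $\mu(R_\alpha)>0$ is needed to pass from $S\ge c$ to $S^2\ge c^2$ is a detail the paper leaves implicit.
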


\begin{proof}
For every fixed lattice with a nonempty list,~\eqref{eq:uniformtarget} gives
\begin{equation}\label{eq:null}
 \E_{\mathbf{t}\sim\Unif(\R^n/\cL)}G(\mathbf{t})=0,\qquad
 \E_{\mathbf{t}\sim\Unif(\R^n/\cL)}G(\mathbf{t})^2=\frac1{|\mathcal S_n|}.
\end{equation}
In the expansion of the square, only pairs $\mathbf{w},-\mathbf{w}$ have nonzero expectation. Markov's inequality bounds the conditional false-positive probability by $1/(|\mathcal S_n|\tau^2)=4|\mathcal S_n|/(V_R^2\mu(R_\alpha)^2)$. The same bound is zero for an empty list. Averaging over $\cL$ and using $\E|\mathcal S_n|=V_R$ proves~\eqref{eq:fp}.
\end{proof}

\begin{lemma}[False negatives for planted targets]\label{lem:fn}
Let $\mathbf{e}$ be chosen independently of $\cL$ and supported on $B_{R_\alpha}$. For a planted target $\mathbf{v}+\mathbf{e}$ with $\mathbf{v}\in\cL$, the probability $P_{\rm FN}$ that the test above outputs ``random'', averaged over the lattice and error, satisfies
\begin{equation}\label{eq:fn}
 P_{\rm FN}\le\frac{4C_n}{V_R\mu(R_\alpha)^2}.
\end{equation}
This includes errors uniform on the sphere of radius $R_\alpha$.
\end{lemma}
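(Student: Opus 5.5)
Throughout we condition on the error $\mathbf{e}$ and then average over it. Since $\mathbf{e}$ is independent of $\cL$, we may fix $\mathbf{e}$ with $\norm{\mathbf{e}}\le R_\alpha$ and treat it as a deterministic target. Write $N(\cL)=|\mathcal S_n|$ and
\[
 S(\cL)=\sum_{\mathbf{w}\in\mathcal S_n}\cos(2\pi\langle\mathbf{w},\mathbf{e}\rangle)=\widehat f(\cL),\qquad f(\mathbf{w})=\mathbf1_{B_R}(\mathbf{w})\cos(2\pi\langle\mathbf{w},\mathbf{e}\rangle).
\]
By periodicity, $G(\mathbf{v}+\mathbf{e})=S/N$. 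For a nonempty list, the test outputs ``random'' exactly when $S/N<\tau=V_R\mu(R_\alpha)/(2N)$, that is, when $S<V_R\mu(R_\alpha)/2$. The empty-list case also has $S=0<V_R\mu(R_\alpha)/2$. So in all cases the false-negative event is exactly $\{S<V_R\mu(R_\alpha)/2\}$. The random normalization $N$ cancels against the threshold, and we never need to control $|\mathcal S_n|$ separately. This cancellation is the reason for the choice of $\tau$ in~\eqref{eq:score}.

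Next we compute the mean of $S$. Siegel's identity gives
\[
 \E_{\cL}S=\int_{B_R}\cos(2\pi\langle\mathbf{w},\mathbf{e}\rangle)\dd\mathbf{w}=V_R\,\mu(\norm{\mathbf{e}}).
\]
The last equality uses~\eqref{eq:mu}: the ball is symmetric, so the imaginary part of the Fourier integral vanishes. Under the standing assumption $2\pi RR_\alpha<j_{n/2,1}$, the product formula~\eqref{eq:besselproduct} shows that $\mu$ is positive and decreasing on $[0,R_\alpha]$. Hence $\E S\ge V_R\mu(R_\alpha)$ for every $\mathbf{e}$ in the support.

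For the fluctuations, Lemma~\ref{lem:rogers} with $A=1$ gives $\Var_{\cL}S\le C_nV_R$. On the false-negative event we have
\[
 S-\E S< \tfrac12V_R\mu(R_\alpha)-V_R\mu(R_\alpha)=-\tfrac12V_R\mu(R_\alpha).
\]
Chebyshev's inequality then bounds its probability by
\[
 \frac{C_nV_R}{(V_R\mu(R_\alpha)/2)^2}=\frac{4C_n}{V_R\mu(R_\alpha)^2}.
\]
This bound is uniform in $\mathbf{e}$, so averaging over $\mathbf{e}$ preserves it and yields~\eqref{eq:fn}. A uniform error on the sphere of radius $R_\alpha$ is supported on $B_{R_\alpha}$ and is independent of $\cL$, so it is covered.

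The only delicate point is the reduction of the ratio test to a threshold on the unnormalized sum $S$, including the empty-list convention. Once that is done, the argument is a direct Siegel-plus-Rogers second-moment bound, with monotonicity of $\mu$ handling errors of norm below $R_\alpha$.
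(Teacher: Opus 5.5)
Your proof is correct and follows the paper's argument: for each fixed $\mathbf{e}$ you use Siegel's identity for the mean, the uniform Rogers bound with $A=1$ for the variance, and monotonicity of $\mu$ on $[0,R_\alpha]$ to get $\E S\ge V_R\mu(R_\alpha)$, then apply Chebyshev and average over $\mathbf{e}$. Your explicit check that the random normalization $|\mathcal S_n|$ and the empty-list convention both reduce to the single event $\{S<V_R\mu(R_\alpha)/2\}$ is a detail the paper leaves implicit.
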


\begin{proof}
For each fixed $\mathbf{e}$, apply Siegel's identity and Lemma~\ref{lem:rogers} to $f(\mathbf{w})=\mathbf1_{B_R}(\mathbf{w})\cos(2\pi\langle\mathbf{w},\mathbf{e}\rangle)$, with $A=1$. Since $\widehat f(\cL)=|\mathcal S_n|G(\mathbf{e})$, we obtain
\begin{equation}\label{eq:scoremom}
 \E_{\cL\sim\mathrm{Haar}}\bigl[|\mathcal S_n|G(\mathbf{e})\bigr]=V_R\mu(\norm{\mathbf{e}}),\qquad
 \Var_{\cL\sim\mathrm{Haar}}\bigl[|\mathcal S_n|G(\mathbf{e})\bigr]\le C_nV_R.
\end{equation}
A false negative means $|\mathcal S_n|G(\mathbf{e})<V_R\mu(R_\alpha)/2$. Because $\mu(\norm{\mathbf{e}})\ge\mu(R_\alpha)$, the sum then differs from its mean by at least $V_R\mu(R_\alpha)/2$. Chebyshev's inequality gives~\eqref{eq:fn} for each fixed $\mathbf{e}$, and averaging over the error law proves the claim.
\end{proof}

For comparison with the conditional score model, the same moment identity gives a more precise variance for the numerator:
\begin{equation}\label{eq:variance}
 \Var_{\cL\sim\mathrm{Haar}}\bigl[|\mathcal S_n|G(\mathbf{e})\bigr]=V_R\bigl(1+\mu(2\norm{\mathbf{e}})\bigr)+\Delta_n,\qquad
 |\Delta_n|\le(C_n-2)V_R.
\end{equation}
Indeed, the diagonal terms $p=q$ in~\eqref{eq:rogers} contribute
\begin{equation}\label{eq:diagonal}
 2\int_{B_R}\cos^2(2\pi\langle\mathbf{w},\mathbf{e}\rangle)\dd\mathbf{w}.
\end{equation}
The off-diagonal terms are bounded as in Lemma~\ref{lem:rogers}.

%%%%%%%%%%%%%%%%%%%%%%%%%%%%%%%%%%%%%%%%%%%%%%%%%%%%%%%%%%%%%%%%%%%%%%%
%%%%%%%%%%%%%%%%%%%%%%%%%%%%%%%%%%%%%%%%%%%%%%%%%%%%%%%%%%%%%%%%%%%%%%%
%%%%%%%%%%%%%%%%%%%%%%%%%%%%%%%%%%%%%%%%%%%%%%%%%%%%%%%%%%%%%%%%%%%%%%%

\section{Proof of Theorem~\ref*{thm:main}}
\label{sec:mainproof}

\begin{proof}
Fix $d\in(0,1)$, write $\alpha=\alpha(d)$, and set $R_\alpha=\alpha\Rp$. Use the dual cutoff $R_\beta=\beta\Rp$, so that $V_{R_\beta}=\beta^n$. For $1<\beta<\mathrm e/(2\alpha)$, put
\begin{equation}\label{eq:phi}
 \delta=\sqrt{1-(2\beta\alpha/\mathrm e)^2},\qquad
 \Phi(\beta,\alpha)=\ln\beta+\delta-1+\ln\left(\frac2{1+\delta}\right).
\end{equation}
The Stirling estimate in Section~\ref{sec:prelim} gives $2\pi R_\beta R_\alpha/(n/2)=2\beta\alpha/\mathrm e+O(\log n/n)$. The signal asymptotic~\eqref{eq:muasymp} therefore gives
\begin{equation}\label{eq:debye}
 V_{R_\beta}\mu(R_\alpha)^2=\exp\bigl(n\Phi(\beta,\alpha)+O(\log n)\bigr).
\end{equation}
Also $2\pi R_\beta R_\alpha<j_{n/2,1}$ for all sufficiently large $n$. Thus any fixed $\beta$ with $\Phi(\beta,\alpha)>0$ gives exponentially small distinguishing error by Lemmas~\ref{lem:fp} and~\ref{lem:fn}.

For fixed $\alpha<1$, $\Phi$ is negative at $\beta=1$, tends to $-\ln\alpha>0$ as $\beta\uparrow\mathrm e/(2\alpha)$, and satisfies
\begin{equation}\label{eq:phiderivative}
 \frac{\partial\Phi}{\partial\ln\beta}=\delta>0.
\end{equation}
Thus $\Phi(\beta,\alpha)$ has a unique zero $\beta_0$ in this interval. Solving $\Phi(\beta_0,\alpha)=0$ gives $\beta_0=(1+\delta)\mathrm e^{1-\delta}/2$, where $\delta$ is evaluated at $\beta_0$. Hence $\alpha=\alpha(\delta)$ and $\log_2\beta_0=q(\delta)$ by~\eqref{eq:curve}. Since $\frac{\dd}{\dd\delta}\ln\alpha(\delta)=-\delta^2/(1-\delta^2)<0$, we have $\delta=d$ and $\beta_0=2^{q(d)}$. The infimum exponent is therefore $q(d)$.

To attain the stated exponent, choose
\begin{equation}\label{eq:mainparameters}
 \beta=\beta_0\mathrm e^{n^{-1/3}}=2^{q(d)}\mathrm e^{n^{-1/3}},\qquad R=\beta\Rp,\qquad V_R=\beta^n.
\end{equation}
Taylor-expand $\Phi(\beta,\alpha)$ in $\ln\beta$ about $\ln\beta_0$. Its value there is zero and its derivative is $d$, so $n\Phi(\beta,\alpha)=d\,n^{2/3}+O(n^{1/3})$. Since $\beta\to\beta_0\in(1,\mathrm e/(2\alpha))$, the Bessel estimate remains applicable. Substitution into~\eqref{eq:debye} gives
\begin{equation}\label{eq:margin}
 V_R\mu(R_\alpha)^2=\exp\bigl(d\,n^{2/3}+O(n^{1/3}+\log n)\bigr).
\end{equation}
The signal $\mu(R_\alpha)$ is positive for all sufficiently large $n$. Lemmas~\ref{lem:fp} and~\ref{lem:fn}, together with~\eqref{eq:margin}, bound both error probabilities for the complete list by $\exp(-d\,n^{2/3}+O(n^{1/3}+\log n))$.

Taking $\varepsilon=1/n$ in~\eqref{eq:count} shows that $\bigl||\mathcal S_n|-V_R\bigr|\le V_R/n$ with probability at least $1-C_n n^2/V_R$. For these lattices, the threshold in~\eqref{eq:score} satisfies
\begin{equation}\label{eq:mainthreshold}
 \tau=\frac{V_R\mu(R_\alpha)}{2|\mathcal S_n|}
       =|\mathcal S_n|^{-1/2+o(1)}=2^{-q(d)n/2+o(n)}.
\end{equation}
The asymptotics follow from~\eqref{eq:margin} and $\ln V_R=q(d)n\ln2+n^{2/3}$. In particular, with probability $1-2^{-\Omega(n)}$,
\begin{equation}\label{eq:listsize}
 |\mathcal S_n|=(1+O(n^{-1}))V_R,\qquad
 \tau^{-2}=V_R^{1+o(1)}=|\mathcal S_n|^{1+o(1)}.
\end{equation}
Thus the list size and volume are asymptotically equal, and the inverse squared cutoff has the same leading exponential order. The subexponential margin in~\eqref{eq:margin} ensures the vanishing error probabilities.

For the query bound, take only the $K_n=\lceil2V_R\rceil$ shortest nonzero dual vectors. This contains the whole ball whenever $\bigl||\mathcal S_n|-V_R\bigr|\le V_R/n$. If the cap is reached within the ball, one may stop with an arbitrary output; this adds at most $C_n n^2/V_R=2^{-\Omega(n)}$ to either error probability. Both error probabilities remain $2^{-\Omega(n^{2/3})}$, with input size $K_n$ and query work $K_n\operatorname{poly}(n)$ both $2^{q(d)n+o(n)}$.
\end{proof}

%%%%%%%%%%%%%%%%%%%%%%%%%%%%%%%%%%%%%%%%%%%%%%%%%%%%%%%%%%%%%%%%%%%%%%%
%%%%%%%%%%%%%%%%%%%%%%%%%%%%%%%%%%%%%%%%%%%%%%%%%%%%%%%%%%%%%%%%%%%%%%%
%%%%%%%%%%%%%%%%%%%%%%%%%%%%%%%%%%%%%%%%%%%%%%%%%%%%%%%%%%%%%%%%%%%%%%%

\end{document}